\newtheorem{theorem}{Theorem}[section]
\newtheorem{lemma}[theorem]{Lemma}
\newtheorem{definition}[theorem]{Definition}
\newtheorem{fact}[theorem]{Fact}
\newtheorem{conjecture}[theorem]{Conjecture}
\DeclareMathOperator{\spn}{span}
\DeclareMathOperator{\rk}{rank}
\DeclareMathOperator{\opt}{opt}
\newcommand{\bw}{\mathbf{w}}
\def\R{\mathbb{R}}
\def\bone{{\bf 1}}
\newcommand{\E}[1]{{\mathbb{E}}\left[#1\right]}
\newcommand{\bF}{\mathbb{F}}
\newcommand{\Ind}{\mathcal{I}}
\newcommand{\Alg}{\mathcal{A}}
\newcommand{\PM}{\mathcal{P}}
\def\BM{{\cal B}}
\def\cM{{\cal M}}
\definecolor{arylideyellow}{rgb}{0.91, 0.84, 0.42}
\title{Matroid Partition Property \\and the Secretary Problem}
\author{Dorna Abdolazimi\thanks{\href{mailto:dornaa@cs.washington.edu}{dornaa@cs.washington.edu}. Research supported by NSF grant CCF-1907845 and Air Force Office of Scientific Research grant FA9550-20-1-0212.}}
\author{Anna R. Karlin\thanks{\href{mailto:karlin@cs.washington.edu}{karlin@cs.washington.edu}. Research supported by Air Force Office of Scientific Research grant FA9550-20-1-0212 and NSF grant CCF-1813135.}}
\author{Nathan Klein\thanks{\href{mailto:nwklein@cs.washington.edu}{nwklein@cs.washington.edu}. Research supported in part by NSF grants DGE-1762114, CCF-1813135, and CCF-1552097.}}
\author{Shayan Oveis Gharan\thanks{\href{mailto:shayan@cs.washington.edu}{shayan@cs.washington.edu}. Research supported by Air Force Office of Scientific Research grant FA9550-20-1-0212, NSF grant  CCF-1907845, and a Sloan fellowship.}} 
\affil{University of Washington}
\begin{document}

\maketitle
\begin{abstract}
	A matroid $\cM$ on a set $E$ of elements has the $\alpha$-partition property, for some $\alpha>0$, 
	if it is possible to (randomly) construct a partition matroid $\PM$ on 
	(a subset of) elements of 
	$\cM$ such that every independent set of $\PM$ is independent in $\cM$ and for any weight function $w:E\to\R_{\geq 0}$, the expected value of the optimum of the matroid secretary problem on $\PM$ is at least an $\alpha$-fraction of the optimum on $\cM$. We show that the complete binary matroid, $\BM_d$ on $\bF_2^d$ does not satisfy the $\alpha$-partition property for any constant $\alpha>0$ (independent of $d$).

	Furthermore, we refute a recent conjecture of \cite{BSY21} by showing the same matroid is $2^d/d$-colorable but cannot be reduced to an $\alpha 2^d/d$-colorable partition matroid for any $\alpha$ that is sublinear in $d$.
\end{abstract}
\newpage

\section{Introduction}

Since its formulation by Babaioff, Immorlica and Kleinberg in 2007~\cite{BIK07, BIKK18}, the matroid secretary conjecture has captured the imagination of many researchers ~\cite{DP08, BDG+09, KP09, IW11, CL12, JSZ13, MTW13, DK13,  Lac14, FSZ15}. This beautiful  conjecture states the following:
Suppose that elements of a known matroid $\cM = (E, \mathcal{I})$ with unknown weights $w:E\to\R_{\geq 0}$ arrive one at a time in a uniformly random order. When an element $e$ arrives we learn its weight $w_e$ and must make an irrevocable and immediate decision as to whether to ``take it" or not, subject to the requirement that the set of elements taken must at all times remain an independent set in the matroid. The {\em matroid secretary conjecture} states that for any matroid, there is an (online) algorithm that guarantees that the expected weight of the set of elements taken is at least a constant fraction of the weight of the maximum weight base. 

More formally, we say the competitive ratio of a matroid secretary algorithm\footnote{That is, an algorithm which decides as elements arrive whether to take them or not.} $A$ on a particular matroid $M$ is
$$\inf_\bw\frac{\E{A_\cM(\bw)}}{\opt_\cM(\bw)}$$
where $A_\cM(\bw)$ is the weight of the set of elements selected by the online algorithm $A$, and $\opt_\cM (\bw)= \max_{I \in \Ind} \sum_{i \in I}w_i.$
We drop the subscript $\cM$ when the matroid is clear in the context.
 The expectation in the numerator is over the uniformly random arrival order of the elements and any randomization in the algorithm itself. The conjecture states that for any matroid, there is an algorithm with competitive ratio $O(1)$.


The matroid secretary conjecture is known to be true for a number of classes of matroids, including partition matroids, uniform matroids, graphic matroids and laminar matroids~\cite{BIK07, BIKK18, DP08, BDG+09, KP09, IW11, JSZ13, MTW13}. In its general form, it remains open. At this time, the best known general matroid secretary algorithm has competitive ratio $O (1/\log\log r)$ where $r$ is the rank of the matroid~\cite{Lac14, FSZ15}.

A reasonably natural approach to proving the matroid secretary conjecture is by  a {\em reduction to a partition matroid}. 
 \begin{definition}
A  matroid $\cM' = (E', \mathcal{I}')$ is a reduction of matroid  $\cM = (E, \mathcal{I})$, if $E' \subseteq E$ and $I' \subseteq I$.
 \end{definition}
A  matroid $\cM $ is a partition matroid if its elements can be partitioned into disjoint sets $P_1, \dots, P_d$ such that $S \subseteq E$ is independent iff $|S \cap P_i| \leq  1$ for all $1\leq i\leq d$. 
Specifically, consider the following class of algorithms: 
\begin{enumerate}
\item Wait until some number of elements have been seen without taking anything. We call this set of elements {\em the sample} and use $S$ to denote this set.
\item Based on the elements in $S$ and their weights, (randomly) reduce $\cM$ to a partition matroid $\PM = P_1 \cup P_2 \cup \dots P_d$ on (a subset of) the non-sample $\overline S$. 
\item In each part $P_i$, run a secretary algorithm which chooses at most one element; e.g. choose the first element in $P_i$ whose weight is above a threshold $\tau_i$ (which may be based on $S$). 
\end{enumerate}
Some appealing applications of this approach which are constant competitive are for graphic matroids \cite{KP09},  laminar matroids, and transversal matroids~\cite{DP08, KP09, JSZ13}. The latter two  algorithms rely crucially on first observing a random sample of elements and then constructing the partition matroid.



Consider the
 complete {\em binary matroid},  $\BM_d$,  which is the linear matroid defined on all vectors in $\bF_2^d$ where a set $S\subseteq \bF_2^d$  is independent if the vectors in $S$ are linearly independent over the field $\bF_2^d$.
Our main result is that for complete binary matroids, no algorithm of the above type, that is, based on a reduction to a partition matroid, can yield a constant competitive ratio for the matroid secretary problem.
\begin{theorem}[Informal]\label{thm:informal} Any matroid secretary algorithm for complete binary matroids $\BM_d$ that is based on a reduction to a partition matroid has competitive ratio $O(d^{-1/4})$.
\end{theorem}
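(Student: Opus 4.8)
The plan is to invoke Yao's minimax principle and exhibit a single distribution over weight functions that defeats every \emph{deterministic} algorithm of the stated form; by linearity this bounds the competitive ratio of every randomized such algorithm. Take $w=\one_H$, where $H\subseteq\bF_2^d\setminus\{0\}$ is a uniformly random set of size $d$ (a uniformly random basis works equally well). A random $d$-subset of $\bF_2^d\setminus\{0\}$ spans $\bF_2^d$ with probability bounded away from $0$ and has expected rank $d-O(1)$, so $\mathbb{E}[\opt_{\BM_d}(w)]=d-O(1)$, and it suffices to prove $\mathbb{E}[\mathrm{ALG}]=O(d^{3/4})$ on this instance.

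Fix a deterministic algorithm. It chooses a random-order prefix $S$ (whose length it may choose adaptively, so every realization of $S$ must be handled), observes $w|_S$ — which is exactly the indicator of $H\cap S$ — and then commits to a partition-matroid reduction $\PM=P_1\cup\dots\cup P_t$ of $\BM_d$ supported on $\overline S$, running a one-item online rule in each part. Since a part contributes at most the largest weight among its elements in $\overline S$ and weights are $0/1$, we have $\mathrm{ALG}\le\#\{i:P_i\cap H\cap\overline S\neq\emptyset\}$. Conditioning on $S$ and $H\cap S$ fixes $(P_i)$, while $H\cap\overline S$ is a uniformly random subset of $\overline S$ of hypergeometric size with mean $\mu:=d\,|\overline S|/(2^d-1)$; bounding $\mathbb{P}[P_i\cap H\cap\overline S\neq\emptyset]\le\min(1,\,|H\cap\overline S|\,|P_i|/|\overline S|)$ and using concavity of $m\mapsto\min(1,cm)$ to move the expectation inside yields the sample-independent estimate
$$\mathbb{E}[\mathrm{ALG}\mid S]\ \le\ \min\Bigl(\mu,\ \sum_{i=1}^{t}\min\bigl(1,\ d\,|P_i|/(2^d-1)\bigr)\Bigr),$$
where the $\mu$ term handles very large samples (almost all of $H$ revealed, little left to take). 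Everything now reduces to a purely structural question about reductions of $\BM_d$.

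The crux is a structural lemma: any partition-matroid reduction of $\BM_d$ has few large parts — at most $2(d-\log_2 s)+O(1)$ parts of size $\ge s$. For a reduction with the maximal number $t=d$ of parts, pick a transversal (necessarily a basis) and change coordinates so it is the standard basis $e_1,\dots,e_d$ with $e_i\in P_i$. Because over $\bF_2$ the relevant determinants are explicit $1\times1$ and $2\times2$ minors, the reduction property forces every $x\in P_i$ to have $x_i=1$, and forbids, for $i\neq j$, simultaneously having some $x\in P_i$ with $x_j=1$ and some $y\in P_j$ with $y_i=1$. Hence $P_i$ lies in a subcube of codimension $1+|B_i|$, where $B_i=\{j\neq i: x_j=0\ \forall x\in P_i\}$, and every pair $\{i,j\}$ satisfies $j\in B_i$ or $i\in B_j$; a short counting/convexity step then bounds the number of small $B_i$'s. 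Substituting this into the display above (together with the elementary bound $\sum_i|P_i|\le 2^d-1$) shows the right-hand side is $O(\log d)$ for every $S$, hence $\mathbb{E}[\mathrm{ALG}]=O(\log d)$ — comfortably inside the claimed $O(d^{3/4})$.

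What remains — and where I expect the real work to lie — is reductions with $t<d$ parts. If $t$ is small (say $t\le d^{3/4}$) we are done, since $\mathrm{ALG}\le t$. If $t$ is close to $d$, the transversal no longer spans $\bF_2^d$, so the elements of $P_i$ are not pinned to a fixed basis and the clean subcube structure only partially survives; one must extend the coordinate argument — e.g.\ by also exploiting the coordinates outside the transversal's span, or by a projection/contraction reduction to a complete binary matroid of smaller dimension — and show the structural lemma degrades only mildly. Combining all cases and all sample sizes gives $\mathbb{E}[\mathrm{ALG}]=O(d^{3/4})$ against $\mathbb{E}[\opt]=d-O(1)$, i.e.\ competitive ratio $O(d^{-1/4})$ (and, by this route, seemingly even $O(\mathrm{polylog}(d)/d)$).
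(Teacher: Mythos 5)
Your overall strategy (random $0/1$ weights given by the indicator of a uniformly random $d$-element set, Yao-style averaging over the algorithm's randomness, and reducing everything to a structural statement about how the elements of $\BM_d$ can be distributed among the parts of any partition-matroid reduction) is exactly the paper's strategy, and your first two paragraphs are essentially sound. The problem is the structural statement itself, and you have named the gap yourself: your coordinate argument only works when the reduction has a full transversal, i.e.\ $d$ nonempty parts whose transversal is a basis. In that case the pinning $x_i=1$ for $x\in P_i$ and the pairwise exclusion via $2\times 2$ minors are correct, and the counting via the sets $B_i$ does give at most $2(d-\log_2 s)+O(1)$ parts of size $\ge s$. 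But the paper explicitly allows parts to be empty, so the reduction need not be rank-preserving, and the case of $t<d$ nonempty parts is not a corner case — it is the generic one. There the transversal spans only a $t$-dimensional subspace, elements of the parts may have arbitrary components outside that subspace, and neither the pinning nor the pairwise exclusion survives. Your suggested repairs do not obviously go through: restricting to the span of the transversal fails because part elements need not lie in it, and contraction/projection fails because independence in $\BM_d$ does not imply independence in a quotient, so the parts need not form a reduction of a smaller complete binary matroid. This is precisely the difficulty the paper's argument is built to avoid: it never uses a transversal, but instead uses the closure fact that for $a\in P_i$, $b\in P_j$ with $i\ne j$ the sum $a+b$ must lie in $P_i\cup P_j$ or outside the partition, and a hypergraph/matching count to show that any reduction of total size $c2^d$ has a part of size more than $c^2 2^d/8$; iterating this gives the paper's Theorem 2.1 (all but $8\sqrt d$ parts cover at most $2^d/d^{1/4}$ elements), which is then combined with the random-weight instance exactly as in your first two paragraphs.

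A further caution: your claimed structural lemma (only $O(d-\log_2 s)$ parts of size $\ge s$, for arbitrary reductions) is substantially stronger than what the paper's counting lemma yields (which gives only roughly $2^{(d-\log_2 s)/2}$ large parts), and it would imply an $O(\log d)$ upper bound on the algorithm's value rather than the paper's $O(d^{3/4})$. It may or may not be true in general, but nothing in your sketch establishes it beyond the full-transversal case, and the fact that it would significantly strengthen the published bound is a signal that the ``remaining work'' you defer is the actual mathematical content of the theorem. As it stands, the proposal proves the statement only for rank-preserving reductions with a spanning transversal and leaves the general case open.
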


We say a matroid $\cM = (E, {\cal I})$ has the $\alpha$-partition property if it can be (randomly) reduced to a partition matroid $\PM$ such that
 $$\E{\opt_{\PM}(w)} \geq  \frac1{\alpha} \opt_{\cM}(w).$$
 In a survey \cite{Din2013}, Dinitz raised as an open problem whether  every matroid $\cM$ satisfies the 
 {\em $\alpha$-partition property} for some universal constant $\alpha>0$. 
Dinitz observes that it is unlikely that the $\alpha$-partition property holds for all matroids, but notes that there is no matroid known for which it is false. 
As a consequence of our main theorem, the complete binary matroid does not satisfy the $\alpha$-partition property for $\alpha\leq O(d^{1/4})$.
In fact, our negative result is stronger, since it allows for the partition matroid $\PM$ to be constructed after seeing a sample and the weights of the sample. This shows that although this  approach works  for laminar and transversal matroids, it does  not  generalize to all matroids.

As a byproduct of our technique we also refute a conjecture of B\'erczi, Schwarcz, and Yamaguchi \cite{BSY21}.
The covering number of a matroid $\cM = (E,\cal{I})$ is the minimum number of independent sets from ${\cal I}$ needed to cover the ground set $E$. A matroid is $k$-coverable if its covering number is at most $k$.
%

\begin{conjecture}[\cite{BSY21}]\label{conj:cov}
	Every $k$-coverable matroid $\cM = (E, \cal {I})$ can be reduced to a $2k$-coverable partition matroid on the same ground set $E$. 
\end{conjecture}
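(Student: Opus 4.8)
The statement is \emph{false}: the whole point of the paper is to refute it, so my ``proof proposal'' is a disproof. I would take $\cM=\BM_d$, the complete binary matroid on the ground set $E=\bF_2^d\setminus\{0\}$ of nonzero vectors (the zero vector is a loop and must be dropped for the covering number to be finite). Two standard facts set things up. First, $\BM_d$ is $k$-coverable with $k=\lceil(2^d-1)/d\rceil=\Theta(2^d/d)$: by Edmonds' matroid covering theorem its covering number is $\max_{\emptyset\neq F\subseteq E}\lceil|F|/\rk(F)\rceil$, and since a rank-$\ell$ subset of $\BM_d$ has at most $2^\ell-1$ elements while $(2^\ell-1)/\ell$ is increasing in $\ell$, the maximum is attained at $F=E$. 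Second, the covering number of any partition matroid is exactly the size of its largest part (each independent set meets each part in at most one element, and that many classes suffice). So it suffices to prove: every reduction of $\BM_d$ to a partition matroid on the \emph{same} ground set $E$ has a part of size $\Omega(2^d)$, which is an $\Omega(d)$ factor larger than $k$ and so certainly not within any constant — or any sublinear — factor.

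The engine is an elementary double count. If $\PM$ with parts $P_1,\dots,P_m$ partitioning $E$ is a reduction of $\BM_d$, i.e.\ $\mathcal{I}(\PM)\subseteq\mathcal{I}(\BM_d)$, then every circuit of $\BM_d$ must be dependent in $\PM$, i.e.\ must contain two elements lying in a common part (otherwise that circuit would be a partial transversal of $\PM$, hence independent in $\PM$ but dependent in $\BM_d$). I would use only the smallest circuits: for every $2$-dimensional subspace $W\subseteq\bF_2^d$ the triple $W\setminus\{0\}=\{u,v,u+v\}$ is a circuit, hence contains a monochromatic pair. Each pair of distinct nonzero vectors spans a unique such $W$, and the number of $2$-dimensional subspaces of $\bF_2^d$ is $\tfrac{(2^d-1)(2^{d-1}-1)}{3}$, so
\[
\sum_{i=1}^{m}\binom{|P_i|}{2}\;=\;\#\{\text{monochromatic pairs}\}\;\ge\;\#\{\text{$2$-dimensional subspaces}\}\;=\;\frac{(2^d-1)(2^{d-1}-1)}{3}.
\]
Bounding the left side crudely by $\tfrac{\max_i|P_i|-1}{2}\sum_i|P_i|=\tfrac{(\max_i|P_i|-1)(2^d-1)}{2}$ and rearranging gives $\max_i|P_i|\ge\frac{2(2^{d-1}-1)}{3}+1=\frac{2^d+1}{3}$.

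Putting the pieces together: any partition-matroid reduction of $\BM_d$ on $E$ has covering number at least $(2^d+1)/3$, while $\BM_d$ is $\Theta(2^d/d)$-coverable; hence a reduction to an $\alpha\cdot(2^d/d)$-coverable partition matroid forces $\alpha=\Omega(d)$, in particular rules out $\alpha=2$ and refutes \cref{conj:cov}, and rules out every $\alpha$ sublinear in $d$. I do not expect a serious obstacle here once the reformulation ``reduction $\Leftrightarrow$ every circuit has a monochromatic pair'' is isolated; the two things to watch are that the reduction must be on the \emph{full} ground set $E$ (on a proper subset one can cheat, e.g.\ on a single plane), and that this crude count only yields $\approx 2^d/3$ rather than the sharp value $2^{d-1}$ (attained already by a two-part reduction, and more generally by a nested ``flag of hyperplanes'' with any number of parts), which would require a finer structural analysis — fixing representatives $r_i\in P_i$, extending to a basis, and showing the relation ``some $a\in P_i$ uses coordinate $j$'' is a transitive tournament, whence $|P_i|\le 2^{|N^+(i)|}$ — but that refinement is not needed to refute the conjecture.
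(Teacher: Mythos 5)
Your disproof is correct, and its engine differs from the paper's. The paper obtains the refutation as a corollary of \cref{lem:main}, whose proof goes through \cref{fact:sums} plus a hypergraph/matching bound controlling the pairs $a,b$ whose sum lands in the leftover set $R=\bF_2^d\setminus\PM$; that extra machinery is there because the lemma must handle reductions covering only a $c$-fraction of $\bF_2^d$, which is exactly the generality needed later for \cref{thm:main}. Since \cref{conj:cov} forces the partition matroid onto the \emph{full} ground set, you can dispense with $R$ entirely and run a clean double count: every $2$-dimensional subspace is a $3$-element circuit, hence must contain a monochromatic pair, and distinct subspaces give distinct pairs, so $\sum_i\binom{|P_i|}{2}\geq\frac{(2^d-1)(2^{d-1}-1)}{3}$, yielding $\max_i|P_i|\geq\frac{2^d+1}{3}$. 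Both arguments then finish identically via Edmonds' theorem ($\beta(\BM_d\setminus\{0\})=(2^d-1)/d$) and the fact that a partition matroid's covering number is its largest part. What your route buys: it is more elementary (no directed-edge/in-degree argument, no hypergraph lemma) and gives a better constant, roughly $2^d/3$ versus the paper's $\approx 2^d/8$, so the conjecture is refuted already for $d$ around $7$ rather than $d\geq 17$. What it gives up: the argument leans on every element lying in some part, so it does not substitute for \cref{lem:main} in the proof of \cref{thm:main}, where the partition lives on a subset of $\overline{S}$ and the set $R$ is unavoidable. Your statement and use of the reformulation ``reduction $\Rightarrow$ every circuit contains two elements in a common part'' and the claimed counts (number of $2$-dimensional subspaces, the bound $\sum_i\binom{|P_i|}{2}\leq\frac{(\max_i|P_i|-1)}{2}(2^d-1)$) are all correct as written.
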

Below we prove that $\BM_d \setminus \{0\}$  refutes the above conjecture for $d\geq 17$. Note that we need to remove $0$ from the complete binary matroid, since the covering number is not defined for matroids that have loops. 
\begin{restatable}{theorem}{conjref}
\label{thm:conjref}
	For any $d\geq 17$ there exists a matroid $\cM$ of rank $d$ that is $k$-coverable for some $k\geq d$, but it cannot be reduced to a $2k$-coverable partition matroid with the same number of elements. In particular, such an $\cM$ can only be reduced to $\Omega(kd)$ coverable partition matroids.
\end{restatable}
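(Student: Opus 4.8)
\emph{Proof plan.}
The plan is to work with $\cM = \BM_d \setminus \{0\}$, the matroid on the $2^d-1$ nonzero vectors of $\bF_2^d$, which has rank $d$. First I would determine its covering number: by the Edmonds--Nash-Williams covering theorem $\cM$ is $k$-coverable for $k = \max_{S\subseteq E}\lceil |S|/\rk(S)\rceil$, and since a subspace of dimension $r$ contributes the ratio $(2^r-1)/r$, which is increasing in $r$, the maximum is attained at $S=E$ and $k = \lceil (2^d-1)/d\rceil$ (one may also exhibit $k$ independent sets covering $\bF_2^d\setminus\{0\}$ directly). As $2^d-1\ge d^2$ for $d\ge 17$ we get $k\ge d$, so it suffices to prove that $\cM$ cannot be reduced, on the same ground set, to a $2k$-coverable partition matroid, and in fact that every such reduction is $\Omega(2^d)=\Omega(kd)$-coverable.

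Next I would reformulate. A partition matroid $\PM$ with parts $P_1,\dots,P_m$ is a reduction of $\cM$ with the same ground set precisely when the $P_i$ partition $\bF_2^d\setminus\{0\}$ and every partial transversal is linearly independent over $\bF_2$; equivalently, colouring each vector by its part, no circuit of $\BM_d$ is rainbow. Moreover the covering number of $\PM$ equals $\max_i|P_i|$ (the largest part dictates how many transversals one needs to cover $E$). So the statement becomes purely combinatorial: any such partition has a part of size $\Omega(2^d)$. Since $2^{d-3} > 2\lceil(2^d-1)/d\rceil$ holds exactly when $d\ge 17$, this finishes both assertions.

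For the structural claim I would first handle $m=d$. Picking one representative per part gives a basis, and after a linear automorphism of $\BM_d$ we may assume $P_j\subseteq\{u:u_j=1\}$. Forbidding rainbow $3$-circuits $\{u,v,u+v\}$ forces that for each pair $i\ne j$, either no element of $P_i$ has its $j$-th coordinate set, or no element of $P_j$ has its $i$-th coordinate set; orienting the pairs accordingly yields a tournament on $[d]$. The vector $e_i+e_j$ must land in some part, which rules out $2$-cycles, and $\sum_{i\in S}e_i$ must land somewhere for every $S$, which forces the tournament to be transitive. Hence, after relabelling, $P_i=\{u:\min\supp(u)=i\}$, so $\max_i|P_i|=2^{d-1}$. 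For general $m$, let $W$ be the $m$-dimensional span of a system of representatives; it meets every part, and the restriction of $\PM$ to $W\setminus\{0\}$ is a reduction of $\BM_d|_W\cong\BM_m$ with all $m$ parts nonempty, hence by the previous case it is the first-nonzero partition of $W$, given by a full flag $W=W_0\supsetneq W_1\supsetneq\cdots\supsetneq W_m=\{0\}$ with $P_i\cap W=W_{i-1}\setminus W_i$. I would then push this structure outward along the cosets of $W_1$, applying rainbow-$3$-circuit constraints with one coordinate taken in $P_1\cap W=W\setminus W_1$: one shows every coset of $W_1$ other than $W_1$ meets at most one part besides $P_1$, and the part other than $P_1$ can take only few of its elements (controlled by the flag layer it sits in). Passing to $\bF_2^d/W_1\cong\bF_2^{d-m+1}$, the induced colouring is again a valid reduction, and each part of $\PM$ is essentially $2^{m-1}$ copies of a part downstairs; an induction on $d$ (base cases $m=1$ and $m=d$) then gives $\max_i|P_i|\ge 2^{m-1}\cdot\Omega(2^{d-m+1})=\Omega(2^d)$.

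The hard part will be this last step — converting the flag on $W$ together with the purely local $3$-circuit constraints into a genuine recursion. The delicate point is that a coset of $W_1$ need not be monochromatic: it may be split between $P_1$ and one other part, so one must bound the non-$P_1$ piece by the size of the relevant flag layer, and argue that the induced quotient colouring (coloured, say, by $P_1$ whenever the coset meets $P_1$) still lifts rainbow circuits, before recursing. This is presumably also where the constant loss accumulates, which is why the conclusion only takes hold from $d\ge 17$.
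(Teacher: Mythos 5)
Your first half coincides with the paper's proof: take $\cM=\BM_d\setminus\{0\}$, compute $k=\lceil(2^d-1)/d\rceil$ via Edmonds' formula (\cref{thm:edmonds}), observe that the covering number of a partition reduction with parts $P_1,\dots,P_d$ covering $\bF_2^d\setminus\{0\}$ is exactly $\max_i|P_i|$, and reduce everything to showing that some part has size $\Omega(2^d)$. Your tournament argument for the case of exactly $d$ nonempty parts is correct and even sharp (it forces the $\min\supp$ partition, so $\max_i|P_i|=2^{d-1}$), which is more than the paper proves in that special case. The genuine gap is the case of $m<d$ nonempty parts, which you cannot avoid: the reduction need not be rank-preserving, pigeonhole only gives $(2^d-1)/m$, and with fewer parts only circuits of length at most $m$ are constrained, so the rigid flag structure disappears.

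The specific claim your recursion rests on --- that every coset of $W_1$ other than $W_1$ itself meets at most one part besides $P_1$ --- is false. Take $d=4$, $P_1=\{u:u_1=1\}$, $P_2=\{e_2,e_3,e_4,e_2+e_3\}$, $P_3=\{e_2+e_4,e_3+e_4,e_2+e_3+e_4\}$; this is a valid reduction, since any rainbow triple contains exactly one vector with $u_1=1$ and hence cannot sum to zero. For \emph{every} choice of representatives, $P_1\cap W$ must be the top flag layer (it is the only part that can have $4$ elements inside $W$), so $W_1$ is a $2$-dimensional subspace of the hyperplane $H=\{u_1=0\}$, and the four elements of $H\setminus W_1$ form a single nontrivial coset of $W_1$; this coset cannot lie entirely in $P_2$ (that would force $W_1=P_3\cup\{0\}$, not a subspace) nor entirely in $P_3$ (since $|P_2|=4>|W_1\setminus\{0\}|=3$), so it meets both $P_2$ and $P_3$. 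The failure is structural, not an artifact of small $d$: whenever $P_1$ contains the complement of a hyperplane $H$, your $3$-circuit constraints ``with one coordinate in $P_1\cap W=W\setminus W_1$'' are vacuous inside $H$ (adding an element of $W\setminus W_1$ leaves $H$ and lands in $P_1$), so the remaining parts may be spread over the cosets of $W_1$ essentially arbitrarily subject only to shorter-circuit constraints, and the quotient-by-$W_1$ recursion never gets started. A secondary issue: even if some recursion were repaired, the theorem requires a concrete constant, roughly $\max_i|P_i|\geq 2^{d}/8$, to beat $2\lceil(2^d-1)/d\rceil$ already at $d=17$; an induction with unspecified accumulating losses would only give the statement for larger $d$. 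The paper sidesteps all structure with a single double-counting argument (\cref{lem:main}): count pairs $a\in P_i$, $b\in P_j$, $i\neq j$, according to where $a+b$ lands, direct each pair toward the part containing the sum, and bound in-degrees; this yields $\max_i|P_i|>cn/8$ uniformly in the number of nonempty parts, which is exactly what is needed here.
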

  
\paragraph{Independent Work.} In recent independent work, Bahrani, Beyhaghi,  Singla, and Weinberg \cite{BBSW21} also studied barriers for simple algorithms for the matroid secretary problem. 
We refer the interested reader to \cite{BBSW21} for the details of their contributions.

\section{Main Technical Theorem}
For an integer $k\geq 1$, we write $[k]:=\{1,\dots,k\}$. The following is our main technical theorem:
\begin{theorem}[Main Technical]\label{thm:bad_partitions}
For  any reduction of the complete binary matroid $\BM_d= ( \bF^d_2,\Ind)$ to a partition matroid  $\PM = P_1 \cup \dots \cup P_d$, there is a subset $T \subseteq [d]$ such that  $|T| \geq  d -  8 \sqrt {d}$  and $|\cup_{i \in T} P_i| \leq  \frac{2^d}{\sqrt[4]{d}}$. 
\end{theorem}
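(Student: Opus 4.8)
The plan is to exploit the defining feature of the reduction. Assume first that all $d$ parts are nonempty (the general case is addressed at the end). Then for any $x,x'\in P_i$ and any full transversal $\tau$ of the remaining parts $(P_k)_{k\ne i}$, both $\{x\}\cup\tau$ and $\{x'\}\cup\tau$ are full transversals of $\PM$, hence independent in $\BM_d$, hence bases of $\bF_2^d$. Fix one full transversal $v_1,\dots,v_d$ of $\PM$ and use it as a coordinate basis, so $v_i=e_i$ in coordinates. For each $i$ set
\[ W_i \;:=\; \bigcap\bigl\{\,\spn(\tau)\ :\ \tau\text{ a full transversal of }(P_k)_{k\ne i}\,\bigr\}. \]
For any such $\tau$, the span $\spn(\tau)$ is a hyperplane missing both $x$ and $x'$, so over $\bF_2$ we get $x+x'\in\spn(\tau)$; hence $P_i+P_i\subseteq W_i$, and since $v_i=e_i\in P_i$, $P_i\subseteq e_i+W_i$.

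The key step is to show $W_i$ is small. Let $S_i:=\{j\ne i:x_i=1\text{ for some }x\in P_j\}$. The transversal $(v_k)_{k\ne i}$ shows $W_i\subseteq\{x:x_i=0\}$; and for $j\in S_i$, the transversal consisting of $(v_k)_{k\ne i,j}$ together with a witness $x_j\in P_j$ with $(x_j)_i=1$ has span $\{x:x_i=x_j\}$ (because $x_j$ has its $i$-th and $j$-th coordinates both equal to $1$, while the remaining $v_k$ span $\{x:x_i=x_j=0\}$), so $W_i\subseteq\{x:x_i=x_j\}$. Intersecting over all $j\in S_i$ gives $W_i\subseteq\spn(v_k:k\notin\{i\}\cup S_i)$. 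Writing $U_i:=\bigcup_{x\in P_i}\supp(x)$ and $u_i:=|U_i|$ (supports w.r.t.\ the $v$-basis), two facts follow: (a) $|P_i|\le 2^{u_i-1}$, since every $x\in P_i$ has $x_i=1$ and $\supp(x)\subseteq U_i$; and (b) $U_i\cap S_i=\emptyset$, since $P_i\subseteq e_i+W_i$ forces $\supp(x)\subseteq[d]\setminus S_i$ for every $x\in P_i$. Fact (b) is an antisymmetry: for $i\ne j$, ``$i\in U_j$'' and ``$j\in U_i$'' cannot both hold, and in particular $u_i+|S_i|\le d$.

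To finish, call $P_i$ \emph{big} if $|P_i|>2^d/d^{5/4}$. By (a) a big part has $u_i>d-\tfrac54\log_2 d+1$, whence $|[d]\setminus U_i|<\tfrac54\log_2 d$ and (using $u_i+|S_i|\le d$) also $|S_i|<\tfrac54\log_2 d$. Let $B$ be the set of big parts and double count $N:=\#\{(i,j)\in B^2:i\ne j,\ j\in U_i\}$: on one hand $N=\sum_{i\in B}(|U_i\cap B|-1)>|B|\bigl(|B|-\tfrac54\log_2 d\bigr)$; on the other hand, since $j\in U_i\iff i\in S_j$ for $i\ne j$, $N=\sum_{j\in B}|S_j\cap B|\le\sum_{j\in B}|S_j|<|B|\cdot\tfrac54\log_2 d$. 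Comparing, $|B|<\tfrac52\log_2 d\le 8\sqrt d$. Then $T:=[d]\setminus B$ satisfies $|T|\ge d-8\sqrt d$ and $\bigl|\bigcup_{i\in T}P_i\bigr|\le\sum_{i\in T}|P_i|\le d\cdot 2^d/d^{5/4}=2^d/\sqrt[4]{d}$, as required. (This argument in fact bounds the excluded index set by $O(\log d)$.)

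I expect the main obstacle to be the ``coset'' step: confining each $P_i$ to a single coset of the subspace $W_i$ and then forcing $W_i$ to be small, and especially extracting the antisymmetry $U_i\cap S_i=\emptyset$. A purely enumerative approach fails here — the natural injection $\prod_i P_i\hookrightarrow GL_d(\bF_2)$, even restricted to sub-collections of the parts, yields only $\prod_i|P_i|\le|GL_d(\bF_2)|<2^{d^2}$, which is consistent with almost every $P_i$ having size close to $2^d$. Once the antisymmetry is available, the ``few big parts'' step is a routine double count. The one remaining loose end is the reduction to the case of all parts nonempty; this is handled by moving empty parts into $T$ and, if necessary, padding with singleton parts that extend a partial basis (or by observing that $\PM$ then has rank below $d$, which only helps).
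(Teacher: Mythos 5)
Your argument is correct and complete in the case where all $d$ parts are nonempty, and there it is a genuinely different route from the paper's: the paper never looks at coset structure at all, but instead counts triples $a\in P_i$, $b\in P_j$, $a+b$ using \cref{fact:sums} to show some part is huge (\cref{lem:main}) and then removes the largest part $8\sqrt d$ times; your hyperplane/coset confinement $P_i\subseteq v_i+W_i$, the antisymmetry $U_i\cap S_i=\emptyset$, and the double count are a nice alternative that in the full-rank case even gives an $O(\log d)$ bound on the number of excluded parts, stronger than $8\sqrt{d}$.

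However, the loose end you defer is a genuine gap. The theorem explicitly allows empty parts (the paper stresses that the reduction need not be rank-preserving, and its application in \cref{thm:main} builds $\PM$ on a subset of $\overline S$), and your argument collapses there: with only $r<d$ nonempty parts, a transversal of the parts other than $P_i$ spans a subspace of codimension greater than one, so the implication ``$x,x'\notin\spn(\tau)\Rightarrow x+x'\in\spn(\tau)$'' is unavailable and $W_i$ need not confine $P_i$. Neither of your proposed repairs works. Padding by a new singleton part $\{v\}$ preserves the reduction property only if $v$ lies outside $\spn(\tau)$ for \emph{every} transversal $\tau$ of the existing parts, and such a $v$ may not exist: take the single part $P_1=\bF_2^d\setminus\{0\}$, or $P_i=\{e_i\}$ for $i\le d-2$ together with $P_{d-1}=\bF_2^d\setminus\spn(e_1,\dots,e_{d-2})$, where the transversal spans already cover all of $\bF_2^d$. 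And rank deficiency does not ``only help'': for $d=3$, $P_1=\{e_1\}$, $P_2=\bF_2^3\setminus\{0,e_1\}$ is a valid reduction in which $P_2$ has $6$ elements and is contained in no coset of a proper subspace, so your structural lemma is simply false for rank-deficient reductions. These small examples satisfy the theorem trivially, but they show that the general case needs a different argument---precisely the work that the paper's counting of sums landing in $P_i\cup P_j\cup R$ does and your approach does not replicate---so as written your proof establishes the theorem only for rank-preserving reductions.
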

Note that,  throughout the paper, for any partition matroid specified by $P_1, \dots, P_d$, we allow sets $P_i$ to be empty. Therefore the partition matroid can effectively have less than $d$ parts and  the reduction does not have to be rank-preserving. 

As a consequence of the above theorem,  there are $O(\sqrt{d})$ parts in $[d]\setminus T $ that contain the vast majority of the elements of $\BM_d$. For appropriately chosen weight vectors, this is bad, since  only $O(\sqrt{d})$ elements can be taken from $\cup_{i \not\in T} P_i$.

We use the  following simple fact.
\begin{fact}
\label{fact:sums} Let $\PM$  be a partition matroid that is reduction of $\BM_d$ with parts $P_1, \ldots, P_d$. Then if two elements $x$ and $y$ are in different parts (say $P_i$ and $P_j$), then their sum $x+y$ is in $P_i$, $P_j$ or $ \bF^d_2 \setminus \cup_i P_i$. 
\end{fact}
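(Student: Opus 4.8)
The plan is to argue by contradiction, using the single structural fact available about a reduction: every set that is independent in $\PM$ is independent in $\BM_d$, i.e.\ linearly independent over $\bF_2$. First I would dispose of the degenerate cases by observing that $x$, $y$, and $x+y$ are pairwise distinct and nonzero. Since the parts of a partition matroid are disjoint and $i \neq j$, we have $x \neq y$; since $0$ is a loop of $\BM_d$ it cannot lie in any independent set, hence in no part $P_\ell$ of any reduction, so $x \neq 0$ and $y \neq 0$; and over $\bF_2$ this forces $x+y \neq x$ and $x+y \neq y$. So $\{x,y,x+y\}$ is genuinely a three-element set.

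Next, suppose toward a contradiction that $x+y \in P_k$ for some index $k \notin \{i,j\}$. Then the set $S = \{x, y, x+y\}$ meets each of $P_i$, $P_j$, $P_k$ in exactly one element and meets every other part in none, so $S$ is independent in the partition matroid $\PM$. By the definition of a reduction, $S$ must then be independent in $\BM_d$, i.e.\ $x$, $y$, $x+y$ are linearly independent over $\bF_2$. But $x + y + (x+y) = 0$ is a nontrivial $\bF_2$-linear relation among these three vectors, which is the desired contradiction. Hence the only parts that can contain $x+y$ are $P_i$ and $P_j$, and the only remaining alternative is that $x+y$ lies in no part, i.e.\ in $\bF_2^d \setminus \bigcup_\ell P_\ell$; this is exactly the claimed trichotomy.

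I do not expect any real obstacle: the statement is an immediate consequence of the fact that a vector together with its $\bF_2$-complement $x+y$ spans only a two-dimensional space, combined with the constraint that independent sets of $\PM$ embed into independent sets of $\BM_d$. The only place that needs a line of care is the handling of the degenerate configurations ($x=y$, or one of the vectors being $0$), which is why I would clear those away at the very start before invoking the partition-matroid independence structure.
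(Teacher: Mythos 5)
Your proof is correct, and it is exactly the intended argument: the paper states this as a ``simple fact'' without proof, the point being that if $x+y$ lay in a third part $P_k$ then $\{x,y,x+y\}$ would be independent in $\PM$, hence linearly independent in $\BM_d$, contradicting $x+y+(x+y)=0$ over $\bF_2$. Your preliminary check that $x$, $y$, $x+y$ are distinct and nonzero (using that $0$ is a loop and so cannot appear in any part of a reduction) is a reasonable piece of care, though it also follows directly from the disjointness of the parts once $x+y$ is assumed to lie in $P_k$ with $k\notin\{i,j\}$.
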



\begin{lemma}
Let $\PM$  be a partition matroid that is reduction of $\BM_d$ with parts $P_1, \ldots, P_d$ and let $R := \mathbb{F}_2^d \smallsetminus \PM$. 
The number of pairs $a \in P_i, b \in P_j$ for $1 \le i < j \le d$ in which $a+b \in R$ is at most $\max_{1 \le i \le d} 2|P_i| \cdot |R|$.
\end{lemma}
\begin{proof}
Create a hypergraph $H$ whose vertices are elements in $\mathbb{F}_2^d$. Now, create a hyperedge $(a,b,a+b)$ for every $a \in P_i, b \in P_j$ for $1 \le i < j \le d$ in which $a+b \in R$. 

Fix any $q \in R$. First, note that there are no two distinct hyperedges $(a,b,q)$, $(a,b',q)$, as this would imply $a+b = a+b'$ and therefore $b=b'$. Therefore, the pairs $(a,b)$ such that $(a,b,q)$ is a hyperedge form a matching. 

Now fix a hyperedge $(a,b,q)$ with $a \in P_i, b \in P_j$. If there is some other edge $(c,d,q)$ such that $c,d \not\in (P_i \cup P_j)$, then $a,b,c,d$ are all in different partitions, which cannot occur as $a+b=c+d$, which is a linear dependence in the partition matroid. Therefore, every edge containing $q$ must contain an element of $P_i \cup P_j$. Therefore the matching contains at most $|P_i|+|P_j| \le 2 \max_{1\leq k\leq d} |P_k|$ edges, from which the claim follows.
\end{proof}

\begin{lemma} \label{lem:main}
Let $\PM$  be a reduction of $\BM_d$ to a partition matroid with parts $P_1,\dots, P_d$ with total size $\sum_{i=1}^d |P_i|=n = c\cdot 2^d$ for some $0 < c \leq 1$. Then, there exists an $1 \leq i \leq d$ such that $|P_i| >  \frac{c}{8}n.$
\end{lemma}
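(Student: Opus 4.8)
The plan is a double-counting argument over pairs of elements lying in distinct parts of $\PM$, combining Fact~\ref{fact:sums} with the preceding lemma.

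Set $m := \max_{1\le i\le d}|P_i|$ and $R := \bF_2^d \setminus \PM$, so that $|R| = 2^d - n$. Let $N$ denote the number of unordered pairs $\{a,b\}$ of distinct elements of $\PM$ lying in different parts; expanding $\big(\sum_i |P_i|\big)^2 = \sum_i|P_i|^2 + 2N$ gives $N = \tfrac12\big(n^2 - \sum_{i=1}^d |P_i|^2\big)$. Fix such a pair, say $a\in P_i$ and $b\in P_j$ with $i\ne j$. By Fact~\ref{fact:sums}, $a+b$ lies in $P_i$, in $P_j$, or in $R$; call the pair of \emph{type $R$} in the first case and of \emph{type $P$} in the other two. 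Since $R$ and the parts are pairwise disjoint, each pair has exactly one type, and I write $N = N_R + N_P$ accordingly.

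The preceding lemma gives $N_R \le 2m|R|$. To bound $N_P$, note that for a type-$P$ pair the sum $a+b$ lies in the part of exactly one of the two elements; orienting the pair as $(x,y)$ so that $x$ is the element whose part contains $x+y$ defines an injection from type-$P$ pairs into the set of ordered pairs $(x,y)$ with $x,y\in\PM$ and $x+y$ in the part of $x$. For each fixed $x$ in a part $P_i$, the map $y\mapsto x+y$ carries $\{y : x+y\in P_i\}$ bijectively onto $P_i$, so there are $|P_i|$ such $y$; summing over $x\in P_i$ and then over $i$ yields $N_P \le \sum_{i=1}^d |P_i|^2$.

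Combining, $\tfrac12\big(n^2 - \sum_i |P_i|^2\big) = N \le 2m|R| + \sum_i |P_i|^2$, hence
\[
n^2 \;\le\; 4m|R| + 3\sum_i |P_i|^2 \;\le\; 4m|R| + 3mn,
\]
using $\sum_i |P_i|^2 \le m\sum_i |P_i| = mn$. Since $|R| = 2^d - n = \tfrac{1-c}{c}\,n$, this becomes $n^2 \le mn\cdot\tfrac{4-c}{c}$, i.e. $m \ge \tfrac{c}{4-c}\,n > \tfrac{c}{8}\,n$, which proves the lemma (with a constant to spare). I expect the only delicate point to be the orientation/injectivity bookkeeping in the bound on $N_P$, together with making sure the distinct-part pair count $N$ is not double counted; the remaining steps are arithmetic.
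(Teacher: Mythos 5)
Your proof is correct and follows essentially the same route as the paper: you count the cross-part pairs, bound those whose sum falls in $R$ via the preceding lemma, and bound those whose sum falls back into one of the two parts by the same injectivity (in-degree) argument, the only cosmetic differences being that you avoid the contradiction framing and keep $|R|=2^d-n$ exact, which even yields the slightly stronger bound $\max_i |P_i| \ge \frac{c}{4-c}\,n$. One small wording slip: when defining the types you attach ``type $R$'' to the first listed case ($a+b\in P_i$) rather than to $a+b\in R$, though your subsequent use of $N_R$ and $N_P$ makes the intended definition clear.
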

\begin{proof}
By way of contradiction, suppose $\max_{1 \le i \le d} |P_i| \le \frac{cn}{8}$. Now, construct a graph whose vertices are the elements in $P$. First, create an edge $(a,b)$ for all $a \in P_i, b \in P_j, i \not= j$ for which $a+b \not\in R$. 

For each such edge $(a,b)$, by \cref{fact:sums} either $a+b \in P_i$ or $a+b \in P_j$. Direct the edge towards $a$ if the former occurs, otherwise direct it towards $b$. Note that the in-degree of each element in a partition $P_i$ is at most $|P_i| \le \frac{cn}{8}$ (since if $a+b=a+d=a' \in P_i$ then $b=d$). Therefore, there are at most $\frac{cn^2}{8}$ such edges.

However, by the previous lemma, there are at least (using that $\max_i |P_i| \le \frac{cn}{8}$):
\begin{eqnarray*}{n \choose 2} - \sum_{i=1}^d {|P_i|\choose 2} - 2\frac{cn}{8} |R| &\ge&  {n \choose 2} - \sum_{i=1}^d {|P_i|\choose 2} - \frac{n^2}{4} \\
&=& \frac{n^2}{2} - \sum_{i=1}^d \frac{|P_i|^2}{2} - \frac{n^2}{4} > \frac{n^2}{8}
\end{eqnarray*}
such edges, where in the first inequality we used that $|R| \le 2^d = \frac{1}{c}n$, 
in the equality we used $\sum_{i=1}^d |P_i|=n$ and in the last inequality we used that
$\sum_{i=1}^d |P_i|^2$ is maximized when $|P_i| = \frac{cn}{8}$ on $8/c$ parts, and  $c \le 1$, this is a contradiction with the above, which gives the lemma.
\end{proof}

Now, we finish the proof of \cref{thm:bad_partitions}.

\begin{proof}[Proof of \cref{thm:bad_partitions}]
We start from $T = [d]$ and inductively repeat the following: if $|\cup_{i\in T} P_i| \geq \frac{2^d}{\sqrt[4]{d}}$,  remove  $ j = \arg \max_{ i \in T} |P_i|$ from $T$.  Using   \cref{lem:main}, if  $|\cup_{i\in T} P_i| \geq \frac{2^d}{\sqrt[4]{d}}$, the size of the partition that we remove is at least
$$ \max_{ i \in T} |P_i| \geq \frac{1}{8\sqrt[4]{d}} \cdot \frac{2^d}{\sqrt[4]{d}} = \frac{2^d}{8\sqrt{d}}.$$
Therefore,  after at most $8\sqrt{d}$ steps, we get $|\cup_{i\in T} P_i| \leq \frac{2^d}{\sqrt[4]{d}}$. This finishes the proof. 
\end{proof}

\section{Main Theorems}

\subsection{Matroid $\alpha$-Partition Property (Proof of \cref{thm:conjref})}

For a matroid $\cM$, Edmonds defined:
\begin{equation}\label{eq:beta} \beta(M):= \max_{\emptyset \subset F\subseteq E} \frac{|F|}{\rk_M(F)}.	
\end{equation}
Note that the maximum in the RHS is attained at {\em flats} of $\cM$, namely sets $F$ that are the same as their closure.
\begin{theorem}[Edmonds \cite{Edm65}]\label{thm:edmonds}
	For any matroid $\cM$  on elements $E$ and with no loops, the covering number of $\cM$, namely the minimum number of independent sets whose union is $E$ is equal to $\lceil \beta(M)\rceil$.
\end{theorem}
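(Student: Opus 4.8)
The plan is to prove the claimed equality by two inequalities. Write $r := \rk_{\cM}$ for the rank function and set $k := \lceil \beta(\cM) \rceil$; since $\cM$ has no loops, $r(F) \ge 1$ for every nonempty $F \subseteq E$, so every ratio $|F|/r(F)$ is well defined and $\beta(\cM)$ is a finite maximum.

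First I would dispatch the easy direction, that the covering number is at least $\lceil \beta(\cM)\rceil$. Suppose $E = I_1 \cup \dots \cup I_N$ with each $I_j \in \Ind$. For any nonempty $F \subseteq E$ we have $F = \bigcup_j (I_j \cap F)$, and each $I_j \cap F$ is independent, so $|I_j \cap F| \le r(F)$. Summing gives $|F| \le N \cdot r(F)$, hence $N \ge |F|/r(F)$; maximizing over $F$ yields $N \ge \beta(\cM)$, and integrality upgrades this to $N \ge \lceil \beta(\cM) \rceil$.

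For the harder direction I would show $E$ is coverable by $k$ independent sets. The first step is to reduce this to a single-matroid rank statement: a cover by $k$ independent sets exists iff $E$ partitions into $k$ independent sets (assign each element to one set containing it), which holds iff the union of $k$ copies of $\cM$---whose independent sets are exactly those partitionable into $k$ sets from $\Ind$---has rank $|E|$. The crux is then the Nash--Williams/Edmonds matroid union rank formula, which for $k$ copies of $\cM$ states
\[
\rk_{\vee}(E) = \min_{F \subseteq E} \left( |E \setminus F| + k \cdot r(F) \right),
\]
where $\rk_{\vee}$ is the rank function of the union. Granting this, the right-hand side is at most $|E|$ (take $F = \emptyset$), so $\rk_{\vee}(E) = |E|$ is equivalent to $|E \setminus F| + k\, r(F) \ge |E|$ for all $F$, i.e. $k\, r(F) \ge |F|$, i.e. $k \ge |F|/r(F)$ for every nonempty $F$---which is precisely $k \ge \beta(\cM)$ and holds by our choice of $k$. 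Thus $E$ is covered by $\lceil \beta(\cM) \rceil$ independent sets.

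The main obstacle is the matroid union rank formula itself, which I do not expect to assume and would prove by an augmenting argument: the inequality $\rk_{\vee}(E) \le \min_F(\cdots)$ is the same counting bound as in the easy direction applied inside the union, while for the reverse I would start from a maximum family of disjoint independent sets and, if some element were uncovered, search for an exchange/augmenting sequence; when none exists, the set of elements reachable by exchanges provides the tight $F$ certifying that the formula's minimum has been met. I would also note the stated remark that the maximum defining $\beta$ is attained at flats, since passing from $F$ to its closure leaves $r(F)$ unchanged while not decreasing $|F|$.
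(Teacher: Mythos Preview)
The paper does not prove this theorem at all: it is stated as a classical result of Edmonds (1965) and simply cited, then applied in the proof of \cref{thm:conjref}. So there is no ``paper's own proof'' to compare against.

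That said, your argument is correct and is essentially the standard proof of Edmonds's covering theorem. The easy direction is exactly the pigeonhole count you give. For the hard direction, reducing to the rank of the $k$-fold matroid union and invoking the Nash--Williams/Edmonds rank formula
\[
\rk_{\vee}(E) \;=\; \min_{F \subseteq E}\bigl(|E\setminus F| + k\cdot r(F)\bigr)
\]
is precisely how Edmonds derived the result; your observation that $\rk_{\vee}(E)=|E|$ iff $k\,r(F)\ge |F|$ for all $F$ is the right unpacking. Your sketch of the rank formula via an augmenting-path argument is also the standard route (and the content of the matroid union theorem), though in most presentations one would simply cite that theorem rather than reprove it. The closing remark that the maximum in $\beta$ is attained at flats is correct for the reason you give and matches the paper's parenthetical comment after \eqref{eq:beta}.
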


Using this, we show that \cref{thm:conjref} is a corollary of \cref{lem:main}. 

\conjref*
\begin{proof}
It follows from \cref{thm:edmonds} that the  binary matroid $\BM_d\setminus \{0\}$ on $\bF_2^d$ satisfies $\beta(\BM_d\setminus \{0\})=(2^d-1)/d$. This is because the flats of $\BM_d \setminus \{0\}$ correspond to (linear) subspaces. A linear subspace of dimension $k$ has exactly $2^k -1$ many vectors. So, the maximum of \eqref{eq:beta} is attained at $F=\bF_2^d \setminus \{0\}$ which has rank $d$.

Now, suppose $\BM_d\setminus \{0\}$ is reduced to a partition matroid ${\cal P}$ with parts $P_1, \dots, P_d$ such that  $\cup_{i=1}^d P_i = \mathbb{F}_2^d \setminus \{0\}$. Observe that $\beta({\cal P}) = \max_{1\leq i\leq d} |P_i|$. To refute \cref{conj:cov} and prove \cref{thm:conjref}, it is enough to show that $\max_{1\leq i\leq d} |P_i| > \Omega(2^d-1)$. However, by \cref{lem:main} (setting $c=1-1/2^d$ to account for deleting the 0 element), this quantity is at least $\frac{2^d-1}{8}$, which gives the theorem.
\end{proof}

\subsection{Matroid Secretary $\alpha$-Partition Property (Proof of \cref{thm:informal})}

\begin{definition}
Let $\PM(S,\bw|_S)$ be any function that maps  a sample $S\subset \bF_2^d$ and weights $\bw|_S$ of elements in the sample to a partition matroid that is a reduction of $\BM_d$, where the elements of $\PM(S,\bw|_S)$ are a subset of $\overline{S} = \bF_2^d \setminus S$. Let ${\bf P}$ be the collection of all such mappings.

\begin{definition}[Randomized Partition Reduction Algorithm]A (randomized) {\em partition reduction algorithm} $\Alg$ for a matroid $\cM$ with $n$ elements consists of two parts:
\begin{itemize}
\item $\Alg$ (randomly) chooses a sample size $0\leq |S|\leq n$  before any elements have been seen; we denote this choice by $s_\Alg$.
\item $\Alg$ (randomly) chooses a mapping $\PM_{\Alg} \in {\bf P}$ and uses it to build $\PM(S,\bw|_S)$ after seeing the sample  $S$. \end{itemize}
\end{definition}
\end{definition}
\begin{theorem}[Main]\label{thm:main}
For any randomized partition reduction algorithm  $\Alg$ for $\BM_d$, with $d \geq 2^{12}$ there is a weight function $\bw:  \bF^d_2 \rightarrow \mathbb{R}_{\geq 0}$ such that

$$\mathbb{E}_{s_\Alg}  \mathbb{E}_{S:|S| = s_\Alg} \mathbb{E}_{\PM_\Alg}[\opt_{\PM_\Alg}(\bw|_{\overline S})] \leq 4 d^{-\frac{1}{4}} \opt_{\BM_d}(\bw).$$
For readability in the above, we have suppressed the fact that the partition matroid  $\PM_\Alg$  (whose elements are a subset of $\overline{S}$) depends on both $S$ and $\bw|_S$. Note that $S$ is drawn from the uniform distribution over subsets of $\bF_2^d$ of size $s_\Alg$.

\end{theorem}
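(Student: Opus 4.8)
\textbf{Proof plan for \cref{thm:main}.} The plan is to build a single hard weight vector from \cref{thm:bad_partitions} via an averaging argument over the randomness of $\Alg$. First I would fix the ultimate target: I want a weight function $\bw$ (deterministic, not depending on $\Alg$) that is large on $\BM_d$ but forces every partition reduction to lose a $d^{1/4}$ factor. The natural candidate, following the remark after \cref{thm:bad_partitions}, is to put weight on a carefully chosen \emph{base} of $\BM_d$: since $\BM_d$ has rank $d$, $\opt_{\BM_d}(\bw)$ can be as large as the total weight of $d$ linearly independent vectors, whereas after reduction only $O(\sqrt d)$ of the parts $P_i$ with $i \notin T$ can jointly absorb all but a $2^d/\sqrt[4]{d}$ fraction of the ground set, so a partition matroid can pick up at most one element from each of those few ``heavy'' parts plus whatever lies in the light parts. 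The subtlety is that $\bw$ must be chosen \emph{before} $S$ and $\PM_\Alg$, so I cannot tailor the base to a specific reduction; instead I would choose $\bw$ to spread weight so that, regardless of which $O(\sqrt d)$ parts turn out to be heavy, those parts cannot contain too much weight.

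Here is the order of steps. (1) \emph{Choose $\bw$.} Take a uniformly random base $B$ of $\BM_d$ (or, more concretely, the standard basis together with a random linear automorphism of $\bF_2^d$), assign each element of $B$ weight $1$ and every other element weight $0$; then $\opt_{\BM_d}(\bw) = d$. By symmetry under $GL_d(\bF_2)$, for any \emph{fixed} set $U \subseteq \bF_2^d$ the expected number of basis elements in $U$ is $d\,|U|/(2^d-1)$. (2) \emph{Apply the structural theorem to a realization.} Fix any outcome of $s_\Alg$, $S$, and $\PM_\Alg$, yielding parts $P_1,\dots,P_d$; \cref{thm:bad_partitions} gives a set $T$ with $|T|\ge d - 8\sqrt d$ and $|\cup_{i\in T}P_i|\le 2^d/\sqrt[4]{d}$. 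Then $\opt_{\PM_\Alg}(\bw|_{\overline S})$ is at most (number of nonempty parts $P_i$, $i\notin T$) $+$ (weight landing in $\cup_{i\in T}P_i$) $\le 8\sqrt d + w(\cup_{i\in T}P_i)$, because each part contributes at most one element. (3) \emph{Average over $\bw$ with everything else fixed.} Since $\cup_{i\in T}P_i$ is determined by $S$ and $\PM_\Alg$ — which depend on $\bw$ only through $\bw|_S$ — I would condition on $\bw|_S$ first; here the cleanest route is to note that the set $\cup_{i\in T}P_i \subseteq \overline S$ has size $\le 2^d/\sqrt[4]{d}$, so conditioned on $\bw|_S$ the expected weight it captures is at most $\bE[\,|B\cap \overline S|\,\big|\,\bw|_S]\cdot \frac{|\cup_{i\in T}P_i|}{|\overline S|} \le d \cdot \frac{2^d/\sqrt[4]{d}}{|\overline S|}$ provided $|\overline S|$ is not too small, i.e. roughly $d/\sqrt[4]{d} = d^{3/4}$. (4) \emph{Handle large samples.} If $|S|$ is so large that $|\overline S|$ is comparable to $2^d/\sqrt[4]{d}$, then $\bE|B\cap \overline S| = d|\overline S|/(2^d-1)$ is itself $O(d^{3/4})$, so the bound still holds trivially. (5) \emph{Combine.} Taking expectations over $s_\Alg, S, \PM_\Alg$ and then over $\bw$ gives $\bE[\opt_{\PM_\Alg}] \le 8\sqrt d + O(d^{3/4}) \le 4 d^{3/4}$ for $d \ge 2^{12}$, and dividing by $\opt_{\BM_d}(\bw)=d$ yields the $4 d^{-1/4}$ ratio; finally derandomize by fixing the best $\bw$ in the support.

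The main obstacle is step (3): making rigorous the interplay between the randomness of $\bw$ and the fact that $S$ and $\PM_\Alg$ are themselves functions of $\bw|_S$. One must be careful that conditioning on $\bw|_S$ does not skew the conditional distribution of $B \cap \overline S$ in a way that inflates its expected intersection with the adversarially-chosen heavy set $\cup_{i \in T} P_i$. The clean way to sidestep this is to use the $GL_d(\bF_2)$-invariance of the distribution of $B$ together with the fact that $\cup_{i\in T}P_i$ lives entirely in $\overline S$ and has bounded size: conditioned on which vectors of $\overline S$ belong to $B$ (a set of expected size $\le d$), each is equally likely to be any given vector of $\overline S$ by symmetry, so the expected overlap with \emph{any} fixed subset of $\overline S$ of size $m$ is at most $d \cdot m / |\overline S|$, uniformly. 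A secondary, more bookkeeping-level obstacle is choosing the numeric slack so the final constant is exactly $4$ and the hypothesis $d\ge 2^{12}$ suffices; I would absorb the $8\sqrt d$ term and the lower-order effects of removing $0$ and of small $|\overline S|$ into generous constants and then check the threshold at the end.
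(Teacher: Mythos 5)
Your overall architecture matches the paper's: random $0/1$ weights, invoke \cref{thm:bad_partitions} to split the parts into at most $8\sqrt d$ heavy ones (contributing at most $8\sqrt d$) and a light union $U=\cup_{i\in T}P_i$ of size at most $2^d/\sqrt[4]{d}$, bound the expected weight landing in $U$ by roughly $d\cdot|U|/|\overline S| \approx d^{3/4}$, and finish by averaging and derandomizing. The genuine gap is your step (3): with $\bw$ the indicator of a uniformly random \emph{basis} $B$, the inequality $\mathbb{E}\left[\,|B\cap U| \,\middle|\, S,\ B\cap S\,\right] \le \mathbb{E}\left[\,|B\cap\overline S| \,\middle|\, S,\ B\cap S\,\right]\cdot |U|/|\overline S|$ is false. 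Conditioned on $B\cap S=Y$, the remaining basis vectors must avoid $\spn(Y)$ and be mutually independent, and their law depends on the linear structure of $\overline S$; they are not exchangeable over $\overline S$, and no $GL_d(\bF_2)$ symmetry survives the conditioning because the event $\{B\cap S=Y\}$ is not invariant. Concretely, for $d=2$, $S=\{01\}$, $Y=\emptyset$ forces $B=\{10,11\}$, so $U=\{10\}$ has conditional expected overlap $1>d|U|/|\overline S|=2/3$; for $d=3$, taking $S$ to be the nonzero vectors of a $2$-dimensional subspace and $Y=\emptyset$ gives each vector of the complementary coset conditional probability $3/4>3/5$. Since the partition (hence $U$) is built \emph{after} seeing $S$ and $\bw|_S$, it can be slanted toward exactly those vectors whose conditional probability is inflated, so you cannot retreat to an average over $U$ either. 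Your step (4) has the same root problem: you use the unconditional $\mathbb{E}|B\cap\overline S|=d|\overline S|/(2^d-1)$, but after conditioning on $B\cap S=Y$ one has $|B\cap\overline S|=d-|Y|$, which can be as large as $d$ even when $|\overline S|$ is small.

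The paper sidesteps this by taking the weighted set $X=\{x_1,\dots,x_d\}$ to be $d$ i.i.d.\ uniform vectors (sampled with replacement) rather than a random basis. Then, conditioned on $S$ and $X_1=X\cap S$, the elements of $X_2=X\cap\overline S$ are i.i.d.\ uniform on $\overline S$, so for \emph{every} subset $U\subseteq\overline S$ --- in particular the adversarially chosen light union --- one has exactly $\mathbb{E}|X_2\cap U|=(d-|X_1|)\,|U|/|\overline S|$, which after averaging over $X_1$ equals $d^{3/4}$ for every sample size. The price is that $\opt_{\BM_d}(\bw)=\rk(X)$ is no longer identically $d$; the paper handles this with the estimate $\mathbb{E}[\rk(X)]\ge d/2$ and a mediant-inequality step to extract a single deterministic $\bw$. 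If you replace your random basis by these i.i.d.\ weights and add those two ingredients, the rest of your outline goes through.
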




\begin{proof}
Suppose that the weights of the elements in $\BM_d$ are selected by setting 
\begin{equation}
\label{weights}	w_i = \bone_{i \in X}
\end{equation}
 where $X=\{x_1, x_2, \ldots, x_d\}$ is a uniformly random sample of $d$ elements from $\bF_2^d$, selected with replacement. Then the optimal independent set has expected weight $ \mathbb{E}_\bw \opt_{\BM _d}(\bw)$ equal to 
\begin{equation}
\label{rankX}\mathbb{E}_X(\rk(X )) = \sum_{i=1}^d \mathbf{1}_{x_i \notin \spn \{x_1, \dots, x_{i-1}\} }= \sum_{i=1}^d \frac{2^d - 2^{i-1}}{2^d - (i-1)}  \geq \sum_{i=1}^d \frac{2^{d-1}}{2^d} = \frac{d}{2}.
\end{equation}
%

Now let $\Alg$ be an arbitrary algorithm that  chooses the sample size $s_\Alg$ and the mapping  $\PM \in {\bf P}$ deterministically. 
We claim that it  suffices to show that  for the weight vector given in \cref{weights} when $X$ is chosen uniformly at random, and for $S$ a uniformly random sample of elements of any fixed size:
\begin{equation}\label{bad}\mathbb{E}_\bw \mathbb{E}_S \opt_{\PM}(\bw|_{\overline{S}})  \le 2 d^{3/4},\end{equation}
where $\PM = \PM(S, \bw|_S)$ is any partition matroid on a subset of $\overline S$ constructed after seeing the elements in $S$ and their weights. (Note that $\opt_{\PM}$ is a upper bound on the performance of $\Alg$.)


To see why, observe that  in the randomized case, by taking the expected value over the randomization in $\Alg$ and then interchanging the order of the expectations, we get
$$ \mathbb{E}_{\bw}\mathbb{E}_{\Alg}  \mathbb{E}_S \opt_{\PM_\Alg}(\bw|_{\overline{S}})  = \mathbb{E}_{\Alg}  \mathbb{E}_{\bw} \mathbb{E}_S \opt_{\PM_\Alg}(\bw|_{\overline{S}})   \le 2 d^{3/4}.$$
Therefore, for $\bw$ chosen at random according to (\ref{weights}) and using (\ref{rankX}),
$$\frac{\mathbb{E}_{\bw}\mathbb{E}_{\Alg}  \mathbb{E}_S \opt_{\PM_\Alg}(\bw|_{\overline{S}}) }{ \mathbb{E}_\bw \opt_{\BM _d}(\bw)} \le 4d^{-1/4}.$$ Applying the mediant inequality, we conclude that there is a set $B$ of size $d$ such that $\frac{\mathbb{E}_{\Alg}  \mathbb{E}_S \opt_{\PM_\Alg}(B|_{\overline{S}})}{\rk(B)}$ is at most  $4 d^{-1/4}$,
completing the proof of the theorem.

It remains to prove \eqref{bad}. Observing that $\bw$ (resp. $\bw|_S$, $\bw|_{\overline{S}}$) is fully determined by $X$ (respectively $X \cap S$, $X \cap \overline{S}$) and letting $X_1 := X \cap S$, $X_2 := X \cap \overline{S}$ we write  
\begin{equation}
\label{eq:rearrange_opt_small}
	\mathbb{E}_\bw \mathbb{E}_S \opt_{\PM}(\bw|_{\overline{S}})    = \mathbb{E}_S \mathbb{E}_{X_1} \mathbb{E}_{X_2}\opt_{\PM}(X_2).
\end{equation}

For any choice of $S$ and $X_1$, the partition matroid $\PM=\PM(S, X_1)$ on a subset of $\overline S$ consists of parts $ P_1 \cup \dots \cup P_d$ (some of these parts could be empty). By \cref{thm:bad_partitions}, there exists a set $T \subseteq [d]$ of size at least $d- 8\sqrt{d}$ such that $|\cup_{i \in T} P_i| \leq \frac{2^d}{\sqrt[4]{d}}$. 
Therefore, $$\opt_{\PM}(X_2)    \le |X_2 \cap \cup_{i \in T} P_i| +8 \sqrt{d}.$$
So, for a fixed $S$, we have
\begin{align*}
\mathbb{E}_{X_1}\mathbb{E}_{X_2} \opt_{\PM}(X_2)
&\le \mathbb{E}_{X_1}\mathbb{E}_{X_2} \left(|X_2 \cap \cup_{i \in T} P_i| + 8 \sqrt{d} \right)
\\
& =\mathbb{E}_{X_1}\left(  (d-|X_1|)\frac{ |\cup_{i \in T} P_i| }{|\overline{S}|} + 8 \sqrt{d}\right). 	\\
& \le  \mathbb{E}_{X_1}\left(  (d-|X_1|)\frac{ 2^d/\sqrt[4]{d} }{|\overline{S}|} + 8 \sqrt{d}\right). 	\\
& =(d-\mathbb{E}(|X_1|))\frac{2^d }{\sqrt[4]{d} |\overline{S}|} + 8 \sqrt{d}. 	
\end{align*}
Finally, we observe that 
$$ \frac{(d-\mathbb{E}(|X_1|))}{|\overline{S}|} \cdot \frac{2^d }{\sqrt[4]{d}} = \frac{d-\frac{|S|\cdot d}{2^d}}{(1-\frac{|S|}{2^d})2^d}\cdot \frac{2^d}{d^{1/4}} = d^{3/4}.$$
Thus, we get  $$\mathbb{E}_{X_1}\mathbb{E}_{X_2} \opt_{\PM}(X_2)  \le d^{3/4}  + 8\sqrt{d} \leq 2d^{3/4},$$ 
where in the last inequality we used our assumption that $d \geq 2^{12}$. Combining \cref{eq:rearrange_opt_small} with this, \cref{bad} follows. 
\end{proof}

\section{Conclusion}

We note that for our bad example, the trivial algorithm for matroid secretary succeeds: one simply needs to take every improving element when it arrives. An interesting open problem which appears approachable is whether there exists an example which simultaneously fails for any partition reduction as well as for the trivial algorithm. One can also more generally try to refute strengthened versions of the partition algorithm.
\printbibliography

\end{document}